\newtheorem{lemma}{Lemma}{}
  \newtheorem{theorem}{Theorem}
\def\d{\mathrm{d}}
\title{Can Balloons Produce Li-Fi?\\A Disaster Management Perspective}
\author{\IEEEauthorblockN{Atchutananda Surampudi$^{1}$, Sankalp Sirish Chapalgaonkar$^{1}$ and Paventhan Arumugam$^{2}$}  
\IEEEauthorblockA{$^{1}$Department of Electrical Engineering, Indian Institute of Technology Madras,\\
$^{2}$ERNET, IIT Madras Research Park.\\
$^{1}$\{ee16s003, ee15b018\}@ee.iitm.ac.in, $^{2}$paventhan@eis.ernet.in}
}
\begin{document}
\maketitle

\begin{abstract}
Natural calamities and disasters disrupt the conventional communication setups and the wireless bandwidth becomes constrained. A safe and cost effective solution for communication and data access in such scenarios is long needed. Light-Fidelity (Li-Fi) which promises wireless access of data at high speeds using visible light can be a good option. Visible light being safe to use for wireless access in such affected environments, also provides illumination. Importantly, when a Li-Fi unit is attached to an air balloon and a network of such Li-Fi balloons are coordinated to form a Li-Fi balloon network, data can be accessed anytime and anywhere required and hence many lives can be tracked and saved. We propose this idea of a Li-Fi balloon and give an overview of it's design using the Philips Li-Fi hardware. Further, we propose the concept of a balloon network and coin it with an acronym, the LiBNet. We consider the balloons to be arranged as a homogenous Poisson point process in the LiBNet and we derive the mean co-channel interference for such an arrangement.\\ \\
\textit{Index Terms}- Attocell dimension, interference, Li-Fi, light emitting diode, photodiode, rate, time division multiple access.
\end{abstract}

\section{Introduction}
Visible-light-communications (VLC) or Light-Fidelity (Li-Fi) has gained a lot of prominence among the research groups around the world for it's secure, safe, energy efficient and high speed data transfer characteristics over a wireless medium. The use of visible light, which has given life to the planet, is now paving the path for a new form of wireless communications \cite{haraldhass}. This technology has both the characteristics of providing data access and illumination at the same time over linear-time-invariant optical wireless channels \cite{barrykahn1}, \cite{barrykahn2}. \\ \par 

Natural disasters like earthquakes or floods, disrupt the available wireless communication infrastructure and as a result the wireless bandwidth becomes scarce. But, in such scenarios, communication becomes an important aspect of the emergency response. A safe, high speed, energy efficient and spectrum efficient wireless access technology is the need of the hour. There have been several technologies in use for emergency communications, like the ham radio \cite{ham}, cognitive radio \cite{cr1}, \cite{cr2} etc. But, when the demand becomes large or large number of people have to be saved, these technologies remain sub-optimal due to the limited radio frequency (RF) spectrum available in such a situation. Also, if floods are considered, the people trapped below flood waters have to be saved immediately. Communicating with them, or to at least receive their location using some uplink signatures becomes a priority. RF waves cannot penetrate through flood waters, whereas the light wave can. So, Li-Fi can be a reliable and energy efficient option to improve the communication capacity along with illuminating the affected area. Lately, for practical purposes and outdoor applications, Li-Fi has been implemented using fixed light-emitting-diode (LED) sources \cite{haasprof}. The question that arises is, can these Li-Fi LEDs fly in the air, without any constraints, at a given height near the ground at around $10$ to $15$ metres, to provide the same data access whenever and wherever required? This idea of a mobile Li-Fi unit can be a good solution during such natural calamities. This can be achieved by using a bunch of centrally monitored balloons, which can carry the Li-Fi LEDs operated by low power batteries to illuminate as well as provide data access (or broadcast services) to a given region for a given period of time.  \\ \par

\subsection{Our approach and contributions}
Our contributions are as follows.
\begin{itemize}
\item \textit{The Li-Fi Balloon - } We present an overview of the design for the balloon using the Philips Li-Fi hardware. 
\item The concept of LiBNet and the overall functioning.  
\item The mean co-channel interference is derived for homogenous Poisson arrangement of balloons in one and two dimensions.
\end{itemize}

\subsection{Context of application:} 
\subsubsection{Earthquakes}During earthquakes, every other communication infrastructure above or below the ground gets disrupted. Also, the problem deepens when any gas leaks happen and become susceptible to combustion. So, using the LiBNet, both illumination and high speed data services can be provided, in a safe way, without interfering with the RF frequencies. \\ \par

\subsubsection{Floods}In a flood affected region, LiBNet will be able to send a broadcast inside the water surface and receive any uplink messages from the people or pets trapped below. The balloons can coordinate to provide tracking services as well.\\ \par         

The further organisation of the paper is as follows. In section II, we give a brief overview of the design of the Li-Fi balloon using the hardware from Philips. In section III, we present the concept of LiBNet and assuming a poisson point process arrangement, we derive the mean co-channel interference in such a scenario. The paper concludes with section IV. 

\section{The Li-Fi Balloon}
The Li-Fi balloon can be designed with the outline presented in Fig. \ref{overview}.
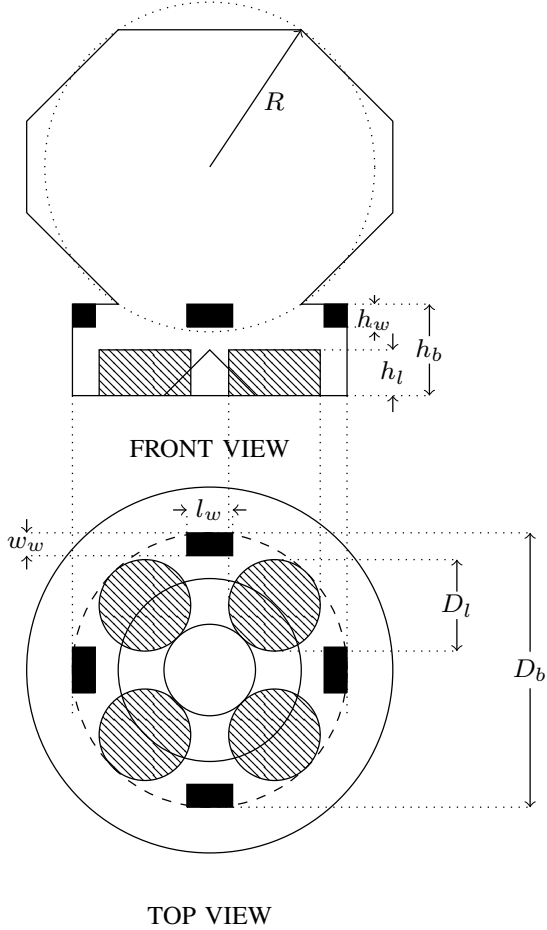
\begin{figure}[ht]
\centering
\resizebox{0.42\textwidth}{!}{%
 \begin{tikzpicture}
 \draw [dotted] (2.5,8.5) circle [radius=1.802];
 \draw (0.5,8) -- (0.5,9) -- (1.5,10) -- (3.5,10) -- (4.5,9) -- (4.5,8) -- (3.5,7) -- (4,7) -- (4,6) -- (1,6) -- (1,7) -- (1.5,7) -- (0.5,8);
 \draw [pattern=north west lines] (1.292,6) -- (1.292,6.5) -- (2.293,6.5) -- (2.293,6);
 \draw (2,6) -- (2.5,6.5) -- (3,6);
 \draw [pattern=north west lines] (2.707,6) -- (2.707,6.5) -- (3.707,6.5) -- (3.707,6);
 \draw [fill] (1,7) -- (1.25,7) -- (1.25,6.75) -- (1,6.75);
 \draw [fill] (2.25,6.75) -- (2.25,7) -- (2.75,7) -- (2.75,6.75) -- (2.25,6.75);
 \draw [fill] (3.75,7) -- (4,7) -- (4,6.75) -- (3.75,6.75) -- (3.75,7);
 \draw [dotted] (4,7) -- (4.9,7);
 \draw [dotted] (4,6) -- (4.9,6);
 \draw [dotted] (3.75,6.5) -- (4.5,6.5);
 \draw [dotted] (4,6.75) -- (4.5,6.75);
 \node at (4.9,6.5) {\footnotesize $h_{b}$};
 \node at (4.3,6.835) {\footnotesize $h_{w}$};
 \node at (4.5,6.25) {\footnotesize $h_{l}$};
 \draw [arrows=->](4.9,6.65)--(4.9,7);
 \draw [arrows=->](4.9,6.35)--(4.9,6);
 \draw [arrows=->](4.5,6.68)--(4.5,6.5);
 \draw [arrows=->](4.5,5.82)--(4.5,6);
 \draw [arrows=->](4.3,6.6)--(4.3,6.7);
 \draw [arrows=->](4.3,7.1)--(4.3,7);
 \draw [arrows=->](2.5,8.5) -- (3.5,10);
 \node [above] at (3.2,9) {\footnotesize $R$};
 \draw [dotted] (2.707,6) -- (2.707,3.707);
 \draw [dotted] (3.707,6) -- (3.707,3.707);
 \draw [dotted] (4,6) -- (4,2.5);
 \draw [dotted] (1,6) -- (1,2.5);
 \node [above] at (2.5,5.2) {\footnotesize FRONT VIEW};
 \draw (2.5,3) circle [radius=2];
 \draw [dashed] (2.5,3) circle [radius=1.5];
 \draw (2.5,3) circle [radius=1];
 \draw (2.5,3) circle [radius=0.5];
 \draw [fill] (2.25,1.5) -- (2.25,1.75) -- (2.75,1.75) -- (2.75,1.5) -- (2.25,1.5);
 \draw [fill] (2.25,4.25) -- (2.25,4.5) -- (2.75,4.5) -- (2.75,4.25) -- (2.25,4.25);
 \draw [fill] (4,2.75) -- (4,3.25) -- (3.75,3.25) -- (3.75,2.75) -- (4,2.75);
 \draw [fill] (1.25,2.75) -- (1.25,3.25) -- (1,3.25) -- (1,2.75) -- (1.25,2.75);
 \draw [pattern=north west lines] (3.207,3.707) circle [radius=0.5];
 \draw [pattern=north west lines] (3.207,2.292) circle [radius=0.5];
 \draw [pattern=north west lines] (1.792,3.707) circle [radius=0.5];
 \draw [pattern=north west lines] (1.792,2.292) circle [radius=0.5];
 \draw [dotted] (2.5,1.5) -- (6,1.5);
 \draw [dotted] (2.5,4.5) -- (6,4.5);
 \draw [dotted] (3.207,4.207) -- (5.2,4.207);
 \draw [dotted] (3.207,3.207) -- (5.2,3.207);
 \node at (5.2,3.707) {\footnotesize $D_{l}$};
 \node at (6,3) {\footnotesize $D_{b}$};
 \draw [dotted] (2.25,4.25) -- (0.5,4.25);
 \draw [dotted] (2.25,4.5) -- (0.5,4.5);
 \draw [dotted] (2.25,4.5) -- (2.25,4.75);
 \draw [dotted] (2.75,4.5) -- (2.75,4.75);
 \node at (2.5,4.75) {\footnotesize $l_{w}$};
 \node at (0.5,4.375) {\footnotesize $w_{w}$};
 \draw [arrows=->](6,3.2)--(6,4.5);
 \draw [arrows=->](6,2.8)--(6,1.5);
 \draw [arrows=->](5.2,3.857)--(5.2,4.207);
 \draw [arrows=->](5.2,3.557)--(5.2,3.207);
 \draw [arrows=->](2.9,4.75) -- (2.75,4.75);
 \draw [arrows=->](2.1,4.75) -- (2.25,4.75);
 \draw [arrows=->](0.5,4.1) -- (0.5,4.25);
 \draw [arrows=->](0.5,4.65) -- (0.5,4.5);
 \node [above] at (2.5,0.1) {\footnotesize TOP VIEW};
 \end{tikzpicture}
 }%
\caption{This figure is to a scale of 1:30. The front and top views of the Li-Fi balloon have been shown. All dimensions are in centimetres (cm) as given in Table I. The balloon's aerial structure can be of any shape depending on the aerodynamical aspects. In the top view, the dashed circular boundary depicts the base of the balloon. There are four shaded regions on the base, on each of which a Li-Fi LED and an uplink receiver are placed. The space in the centre of the base is reserved for the LED drivers, constant power source and signal processing modem. For inter-balloon communication, at the corners of the base, four wireless nodes are placed (filled regions), with beamforming capability.}
\label{overview}
\end{figure}

\begin{table}
\caption{dimensions of the balloon}
\label{abc}
\begin{tabular}{ | m{3cm} | m{2cm}| m{2.5cm} |} 
\hline
Part of the Balloon & Symbol used & Dimension \\ 
\hline
Diameter of the base & $D_{b}$ & $90$cm \\ 
\hline
Height of the basket & $h_{b}$ & $45$cm  \\
\hline
Radius of the balloon sheet & $R$ & $70$cm \\
\hline
Diameter of the area for downlink and uplink components (Shaded area) & $D_{l}$ & $30$cm  \\
\hline
Height of the Downlink LED & $h_{l}$ & $15$cm  \\
\hline
Height of the wireless node(fixed near the roof of the basket) & $h_{w}$ & $10$cm  \\
\hline
Length of the wireless node & $l_{w}$ & $10$cm \\
\hline
Width of the wireless node & $w_{w}$ & $10$cm \\
\hline 
\end{tabular}
\end{table}
 
\subsection{The Base}
A Li-Fi balloon will contain all the necessary communication equipment and the power source. The construction of the balloon is as follows. It consists of a base (closed basket) onto which four downlink LEDs and uplink receivers are placed (shaded area). The shaded areas surround the empty area at the centre, where the power and signal processing modem is placed. Four wireless nodes (with beam forming) are placed in alternate corners of the base, for inter-balloon communication. 
The material used for the basket can be any strong waterproof and non-conducting fiber \cite{base}.  \\ \par
Distribution of weight should be uniform and homogenous to stabilize the balloon at a given height. So, a symmetric arrangement of components is desired. 

\subsection{The balloon sheet}
The frame of the balloon's arial sheet, can be of circular or hexagonal shape, appropriate to the aerodynamic balancing aspects \cite{sheet1}. The material used can be a thin rigid polymer as proposed in \cite{poly1} and \cite{poly2} or water resistant polymers as in \cite{poly3}. Light weight composites are preferred.   

\subsection{Components and Hardware}
The Li-Fi components used are shown in Table II.
\begin{table}
\caption{components used per balloon.}
\label{abc}
\begin{tabular}{ | m{2.5cm} | m{2.5cm}| m{1cm} | m{1cm} |} 
\hline
Component & Name & Quantity & Payload (max. value)  \\ 
\hline
 Luxspace DN561B & Philips downlink Li-Fi transmitter& $4$ & $0.4\times4=1.6$kg   \\ 
\hline
Philips modem board & Philips modem board & $1$ & $0.2$kg \\
\hline
LBRD1514-1& Xitanium $20$W LED power driver & $1$ & $0.3$kg \\
\hline
LBRD14016-3 & Uplink-IR Receiver & $4$ & $0.1\times4=0.4$kg  \\
\hline
DC Power supply & Rechargeable DC power supply & $1$ & $2$kg  \\
\hline
D-Link wireless access points & Wireless nodes A & $4$ & $4\times0.5=2$kg  \\
\hline
Motion and Position sensor & Sensor & $1$ (each) & $0.1$kg \\
\hline 
\end{tabular}
\end{table}
From Table II, we have the total payload per balloon as $6.6$kg. The dimensions in Table I suggest that the balloon will be closely packed. We now discuss the use of each component in Table II. The Philips \textit{Luxspace DN561B} is a Li-Fi downlink transmitter and \textit{LBRD14016-3} is an uplink infrared receiver. These are used for illumination and data access with various emission properties \cite{ph1}. In \cite{atc1}, these components have been used to characterize outage in an indoor Li-Fi environment. The \textit{LBRD1514-1} is a set of LED power drivers which are an integral part for intensity modulation (IM) purposes. The modem board acts as a central switch for all the LEDs and for IM. The modem board, drivers and the DC power supply can be appropriately fitted in the central space provided. The sensors will help in stabilizing the position and the inclination of the balloon. The wireless nodes will be useful to form an interconnected wireless balloon network.   

\section{The LiBNet}
A network of Li-Fi balloons is called LiBNet. Each balloon, using the wireless nodes, forms an interconnected wireless network. The balloons can be assumed to be arranged as a Poisson point process\footnote{This assumption gives the worst case co-channel interference in the network.}, as in \cite{coverage}, in both one and two dimensions which is shown in Fig. \ref{one} and Fig. \ref{two} respectively. For any arrangement, we consider all the balloons to be stabilized at a uniform height $h$ and have an effective\footnote{Effective, because each balloon has four downlink LEDs.} Lambertian emission order $m=\frac{-\ln(2)}{\ln(\cos(\theta_{h}))}$. $\theta_{h}$ is the effective half-power-semi-angle (HPSA) of the balloon. The trapped victim is assumed to be at a distance $z$ from the tagged balloon, which is nearest to the victim.      

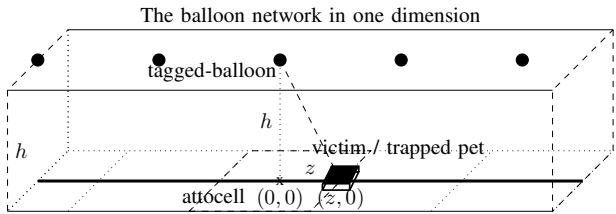
\begin{figure}[ht]
\centering
\resizebox{0.45\textwidth}{!}{%
 \begin{tikzpicture}
 \draw [dashed] (0,0) -- (0,2) -- (1,3);
 \node [right] at (0,1) {\normalsize $h$};
 \node [left] at (4.5,1.5) {\normalsize $h$};
 \node [above] at (5,0.5) {\normalsize $z$}; 
 \draw (1,3) -- (10,3);
 \draw [dashed] (10,3) -- (10,1) -- (9,0) -- (9,2) -- (10,3); 
 \draw (0,2) -- (9,2);
 \draw (0,0) -- (9,0);
 \draw [fill] (0.5,2.5) circle [radius=0.10];
 \draw [fill] (2.5,2.5) circle [radius=0.10];
 \draw [fill] (4.5,2.5) circle [radius=0.10];
 \draw [fill] (6.5,2.5) circle [radius=0.10];
 \draw [fill] (8.5,2.5) circle [radius=0.10];
 \draw [dotted] (1,3) -- (1,1) -- (0,0);
 \draw [dotted] (1,3) -- (1,1) -- (0,0);
 \draw [dotted] (1,1) -- (10,1);
 \draw [line width= 0.05cm] (0.5,0.5) -- (9.5,0.5);
 \node [above] at (5,3) {\normalsize The balloon network in one dimension };
 \node [below] at (4.5,0.5) {\normalsize $(0,0)$};
 \node [below] at (5.5,0.5) {\normalsize $(z,0)$};
 \node at (4.5,0.5) {\footnotesize x};
 \draw [thick] (5.2,0.35) -- (5.65,0.35) -- (5.8,0.65) -- (5.35,0.65) -- (5.2,0.35);
 \draw [fill] (5.2,0.45) -- (5.65,0.45) -- (5.8,0.75) -- (5.35,0.75) -- (5.2,0.45);
 \draw [thick] (5.2,0.35) -- (5.2,0.45);
 \draw [thick] (5.65,0.35) -- (5.65,0.45);
 \draw [thick] (5.8,0.65) -- (5.8,0.75);
 \draw [thick] (5.35,0.65) -- (5.35,0.75);
 \draw [dashed] (4.5,2.5) -- (5.5,0.5);
 \draw [dotted] (4.5,0.5) -- (4.5,2.5);
 \draw [dotted] (1,0) -- (2,1);
 \draw [dotted] (3,0) -- (4,1);
 \draw [dotted] (5,0) -- (6,1);
 \draw [dotted] (7,0) -- (8,1);
 \draw [dotted] (9,0) -- (10,1);
 \draw [dashed] (3,0) -- (5,0) -- (6,1) -- (4,1) -- (3,0);
 \node [above] at (3.4,0.001) {\normalsize attocell};
 \node [left] at (4.55,2.3) {\normalsize tagged-balloon};
 \node [above] at (6.45,0.75) {\normalsize victim / trapped pet};
 \end{tikzpicture}
 }%
\caption{\textit{(One dimension model)} This figure shows the one dimension balloon network. The balloons can be assumed to be arranged deterministically or as a Poisson point process $\psi$, stabilized at a height $h$ (black circular dots). The rectangular dotted regions on ground depict the attocells corresponding to each balloon above. The trapped victim (small cuboid) at $(z,0)$ (inside one of the attocell), receives data wirelessly from the tagged-balloon corresponding to the attocell in which he/she is located. Here, that attocell is highlighted as dash-dot. All other balloons become the co-channel interferers. Here, we assume that the victim can be trapped only along the thick line on ground. Every balloon is interconnected with each other wirelessly.}
\label{one}
\end{figure} 

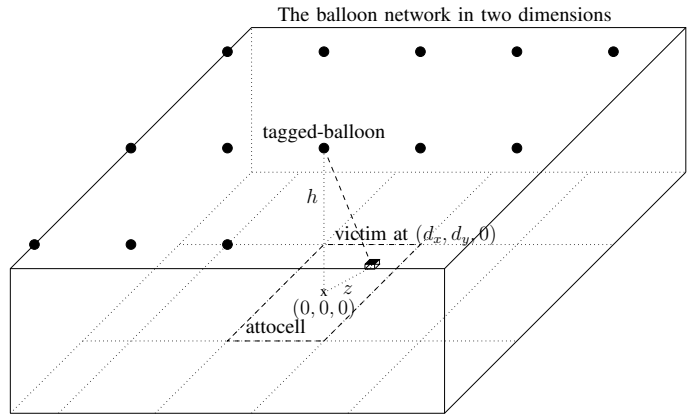
\begin{figure}[ht] 
\centering
 \resizebox{0.5\textwidth}{!}{%
 \begin{tikzpicture}
 \draw (0,0) -- (0,3) -- (5,8) -- (14,8) -- (14,5);
 \draw (14,5) -- (9,0) -- (0,0);
 \draw (14,8) -- (9,3) -- (0,3);
 \draw (9,0) -- (9,3);
 \draw [dotted] (0,0) -- (5,5) -- (14,5);
 \draw [dotted] (5,5) -- (5,8);
 
 \draw [fill] (0.5,3.5) circle [radius=0.10];
 
 \draw [fill] (2.5,3.5) circle [radius=0.10];
 
 \draw [fill] (4.5,3.5) circle [radius=0.10];
 
 
 
 \draw [fill] (2.5,5.5) circle [radius=0.10];
 \node at (5.5,1.8) {\large attocell};
 
 \draw [fill] (4.5,5.5) circle [radius=0.10];
 
 \draw [fill] (6.5,5.5) circle [radius=0.10];
 \node [above] at (6.5,5.5) {\large tagged-balloon};
 
 \draw [fill] (8.5,5.5) circle [radius=0.10];
 
 \draw [fill] (10.5,5.5) circle [radius=0.10];
 
 \draw [fill] (4.5,7.5) circle [radius=0.10];
 
 \draw [fill] (6.5,7.5) circle [radius=0.10];
 
 \draw [fill] (8.5,7.5) circle [radius=0.10];
 
 \draw [fill] (10.5,7.5) circle [radius=0.10];
 
 \draw [fill] (12.5,7.5) circle [radius=0.10]; 
 
 \draw [dotted] (1,0) -- (6,5);
 \draw [dotted] (3,0) -- (8,5);
 \draw [dotted] (5,0) -- (10,5);
 \draw [dotted] (7,0) -- (12,5);
 \draw [dotted] (1.5,1.5) -- (10.5,1.5);
 \draw [dotted] (3.5,3.5) -- (12.5,3.5);
 \draw [dotted] (6.5,2.5) -- (6.5,5.5);
 \node [align=center] at (6.5,2.5) {\small x};
 \node [below] at (6.5,2.5) {\large $(0,0,0)$};
 \node [left] at (6.5,4.5) {\large $h$}; 
 \draw (7.35,2.99) -- (7.55,2.99) -- (7.65,3.09) -- (7.45,3.09) -- (7.35,2.99);
 \draw [fill] (7.35,3.09) -- (7.55,3.09) -- (7.65,3.19) -- (7.45,3.19) -- (7.35,3.09);
 \draw (7.35,2.99) -- (7.35,3.09);
 \draw (7.55,2.99) -- (7.55,3.09);
 \draw (7.65,3.09) -- (7.65,3.19);
 \draw (7.45,3.09) -- (7.45,3.19);
 \draw [dotted] (6.5,2.5) -- (7.5,3.04);
\node [below] at (8.4,4.04) {\large victim at $(d_{x},d_{y},0)$};
 \node [below] at (7,2.77) {\large $z$};
 \draw [dashed] (6.5,5.5) -- (7.5,3.04); 
 \draw [dashed] (6.5,1.5) -- (8.5,3.5) -- (6.5,3.5) -- (4.5,1.5) -- (6.5,1.5);
 \node [above] at (9,8) {\large The balloon network in two dimensions}; 
 
 \end{tikzpicture}
}%
\caption{\textit{(Two dimension model)} This figure shows the two dimension balloon network. The balloons can be assumed to be arranged deterministically or as a Poisson point process $\psi$, stabilized at a height $h$ (black circular dots). The rectangular dotted regions on ground depict the attocells corresponding to each balloon above. The trapped victim (small cuboid) at $(d_{x},d_{y},0)$ (inside one of the attocell), receives data wirelessly from the tagged-balloon corresponding to the attocell in which it is located. Here, that attocell is highlighted as dash-dot. All other balloons become the co-channel interferers. Here we assume that the victim can be trapped anywhere on the ground plane. Every balloon is interconnected with each other wirelessly.}
\label{two}
\end{figure} 

\subsection{How does the LiBNet work?}
Each balloon with the necessary Li-Fi equipment will cover a certain affected region via illumination. Uplink signatures from the trapped lives, can be received and downlink broadcast can be done. The sensing of uplink signatures can be done efficiently by using machine learning as in \cite{atckali} for cognitive radio networks. All the balloons will be interconnected wirelessly and linked to a fusion centre (FC) which is in-turn connected to the Internet. The uplink and downlink data will be routed through the network and through the FC as a gateway to the Internet. This model can be assumed similar to a wireless sensor network (WSN) in a cooperative sensing environment \cite{wsn}. Now, various routing protocols can be used or devised \cite{atc2}, \cite{wsn2} to route the data efficiently. So, such a network of Li-Fi balloons can be used anywhere in the affected region. Their interconnected network can be used to track the lives of many and save them immediately.

\subsection{The mean co-channel interference}
Let all the points in the given balloon network be arranged as a Poisson point process $\psi$, in both one and two dimensions. Given an origin, let the balloons be at a distance $x$ from the origin. The user photodiode (PD) is at a distance of $z$ from the origin. Let the tagged balloon be at origin i.e. $x=0$. Now, assuming a wavelength reuse factor of unity, all other balloons become co-channel interferers. We now derive the mean co-channel interference in the Poisson network of balloons with the assumptions in \cite{chen}. For a distance $D_{x}$ between the PD and the balloon, the Signal-to-Interference-plus-Noise-Ratio (SINR) for such a network is given as 
\begin{equation}
\gamma(z) = \frac{(z^{2}+h^{2})^{-m-3}\rho(D_{0})}{ \sum_{x\in\psi \setminus 0}( (x + z)^{2} + h^{2} )^{-m-3}\rho(D_{x})  + \Omega},
\label{eqn:sinr}  
\end{equation}
where, $\rho(D_{x})$ is the field-of-view (FOV) $\theta_{f}$ constraint function of the PD used by the trapped victim, which is defined as 
\begin{equation*}
\rho(D_{x}) = \left\{
               \begin{array}{ll}
                 1,& |D_{x}|\leq h\tan(\theta_{f}), \\
                 0,& |D_{x}|> h\tan(\theta_{f}).
              \end{array}
              \right.
\end{equation*}

The co-channel interference $\mathcal{I}_{x}$ is
\[\mathcal{I}_{x} = \sum_{x\in\psi \setminus 0}f(x),\]
where $f(x)$ is given as 
\[ ((x + z)^{2} + h^{2} )^{-m-3}\rho(D_{x}).  \]    
Hence, we state the theorem for mean interference below. 
\begin{theorem}
Consider a photodiode, with FOV $\theta_{f}$ radians, situated at a distance $z$ (inside an attocell) from the origin, in a Poisson distributed balloons of Li-Fi LEDs, emitting light with an effective Lambertian emission order $m$, installed at a height $h$. Then, for a wavelength reuse factor of unity, the mean co-channel interference caused by the Poisson distributed co-channel interferers of intensity $\lambda(x)$ at the photodiode is
\begin{align*}
\mathbb{E}(\mathcal{I}_{x})=\int_{\mathbb{S}}\lambda(x)f(x)\d x,
\end{align*}
where, $\mathbb{E}(.)$ is the expectation operator over all possible random arrangements in the Poisson point process and $\mathbb{S}$ is the support of the integration. 
\label{theorem1}
\end{theorem}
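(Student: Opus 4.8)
The plan is to recognize that the interference $\mathcal{I}_{x}=\sum_{x\in\psi\setminus 0}f(x)$ is a linear statistic (a simple sum) over the atoms of the Poisson point process $\psi$, so that the claimed identity is precisely the content of Campbell's averaging theorem from stochastic geometry. First I would observe that $f(x)=((x+z)^{2}+h^{2})^{-m-3}\rho(D_{x})$ is nonnegative and measurable: it is the product of a strictly positive power term and the $\{0,1\}$-valued FOV indicator $\rho$. Nonnegativity lets me apply Campbell's formula without any a priori integrability hypothesis, since both sides are then well defined in $[0,\infty]$; the FOV indicator confines the support to $|D_{x}|\leq h\tan(\theta_{f})$, a bounded set, which ultimately forces the integral to be finite.

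The second step is to handle the deletion of the tagged balloon at the origin. Because the tagged balloon is conditioned to sit at $x=0$, the interfering configuration is the reduced process $\psi\setminus 0$. To take the expectation of a sum over this reduced configuration I would invoke Slivnyak's theorem (equivalently, the Slivnyak--Mecke formula): for a Poisson point process the reduced Palm distribution at any fixed point coincides with the law of the original process. Hence, conditioned on an atom at the origin, the remaining points $\psi\setminus 0$ are distributed exactly as $\psi$ itself, and the mean interference is unaffected by removing the origin atom.

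With this in place, the third step is a direct application of Campbell's theorem to $\sum_{x\in\psi}f(x)$, which for a Poisson process of intensity $\lambda(x)$ yields
\begin{align*}
\mathbb{E}(\mathcal{I}_{x})=\int_{\mathbb{S}}\lambda(x)f(x)\,\d x,
\end{align*}
where $\mathbb{S}=\mathbb{R}$ in the one-dimensional model of Fig.~\ref{one} and $\mathbb{S}=\mathbb{R}^{2}$ in the two-dimensional model of Fig.~\ref{two}, the effective domain being cut down by $\rho$ to the interval or disk $|D_{x}|\leq h\tan(\theta_{f})$.

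The main obstacle I anticipate is the rigorous treatment of the origin deletion: a naive application of Campbell's theorem to the reduced sum $\sum_{x\in\psi\setminus 0}f(x)$ without appealing to Slivnyak's theorem would be technically incomplete, because the expectation is implicitly taken under the Palm measure of $\psi$ given a point at the origin rather than under its unconditioned law. Once Slivnyak's theorem identifies these two, the remainder is the routine Campbell computation together with the verification that the FOV constraint renders the support, and hence the integral, finite.
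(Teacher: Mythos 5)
Your proof is correct, but it takes a genuinely different route from the paper's. You invoke Campbell's theorem directly on the linear statistic $\sum_{x\in\psi\setminus 0}f(x)$, using Slivnyak's theorem to identify the reduced Palm distribution with the law of $\psi$ itself so that deleting the tagged balloon at the origin costs nothing. The paper instead computes the Laplace transform $\mathcal{L}_{\mathcal{I}_{x}}(s)=\mathbb{E}_{\psi}(e^{-s\mathcal{I}_{x}})$ via the probability generating functional of the Poisson process, differentiates once with respect to $s$, and sets $s=0$ to read off the mean. The two approaches trade off as follows: yours is shorter and more rigorous on exactly the point the paper glosses over --- the paper's step (a) silently applies the PGFL to the reduced process $\psi\setminus 0$, which is only valid by the same Slivnyak argument you make explicit, and the paper also does not justify differentiating under the expectation sign, whereas your appeal to nonnegativity and measurability of $f$ makes Campbell's formula applicable with both sides well defined in $[0,\infty]$. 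The paper's transform method, on the other hand, yields more as a by-product: the full Laplace transform characterizes the interference distribution, and repeated differentiation $(-1)^{n}\partial^{n}_{s}\mathcal{L}_{\mathcal{I}_{x}}(s)\big|_{s=0}$ gives all higher moments, not just the mean. One small caution on your side: you state $\mathbb{S}=\mathbb{R}$ (or $\mathbb{R}^{2}$) with the FOV indicator cutting the support down; that matches the paper's Appendix, which integrates over $\mathbb{R}$, though note the subsequent Lemmas restrict the support further (e.g.\ to $[z,h\tan(\theta_{f})]$), a modelling choice rather than something either proof derives.
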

\begin{proof}
The proof is provided in Appendix \ref{app:theorem1}.
\end{proof}
We now extend Thm. \ref{theorem1} to one and two dimension LiBNets in the following Lem. \ref{lem1} and \ref{lem2} respectively.

\begin{lemma}
The mean co-channel interference in a one dimension LiBNet with homogenously distributed Poisson point balloons is given as
\begin{align*} 
\mathbb{E}(\mathcal{I}_{x})&=\lambda\bigg(h^{1-2\beta}\tan(\theta_{f}){}_{2}F_{1}\bigg(0.5,\beta;1.5;-\tan^{2}(\theta_{f})\bigg) \nonumber\\
&\ \ \ \ \ - zh^{-2\beta}{}_{2}F_{1}\bigg(0.5,\beta;1.5;-\frac{z^{2}}{h^{2}}\bigg)\bigg),
\end{align*}
where, ${}_{2}F_{1}(.;.;.)$ is the ordinary hypergeometric function and $\beta=m+3$. 
\label{lem1}
\end{lemma}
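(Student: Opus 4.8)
The plan is to specialize Theorem~\ref{theorem1} to the homogeneous one-dimensional case and then reduce the resulting integral to closed form. Since the point process is homogeneous, the intensity is constant, $\lambda(x)=\lambda$, so the mean interference becomes $\mathbb{E}(\mathcal{I}_x)=\lambda\int_{\mathbb{S}} f(x)\,\d x$ with $f(x)=\big((x+z)^2+h^2\big)^{-\beta}\rho(D_x)$ and $\beta=m+3$. The first step is to use the FOV indicator $\rho$ to cut down the support: $\rho(D_x)=1$ exactly when the horizontal separation between the PD and the interferer satisfies $|x+z|\le h\tan(\theta_f)$, so only the balloons lying within this window contribute. Reading off the geometry of Fig.~\ref{one} (balloons on the far side of the origin, victim at horizontal offset $z$), the admissible range is $x\in[0,\,h\tan(\theta_f)-z]$, and everything outside is annihilated by $\rho$.

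Next I would remove the shift with the substitution $u=x+z$, $\d u=\d x$, which sends the limits to $u\in[z,\,h\tan(\theta_f)]$ and turns the integrand into the even function $(u^2+h^2)^{-\beta}$. Thus
\[
\mathbb{E}(\mathcal{I}_x)=\lambda\int_{z}^{h\tan(\theta_f)}(u^2+h^2)^{-\beta}\,\d u
=\lambda\left(\int_{0}^{h\tan(\theta_f)}-\int_{0}^{z}\right)(u^2+h^2)^{-\beta}\,\d u,
\]
so the two terms of the claimed formula should emerge as the two definite integrals, one with upper limit $h\tan(\theta_f)$ and one with upper limit $z$.

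The crux is therefore a single antiderivative, $\int_0^T(u^2+h^2)^{-\beta}\,\d u$, expressed through ${}_2F_1$. My plan is to rescale by $s=u/h$ to pull out $h^{1-2\beta}\int_0^{T/h}(1+s^2)^{-\beta}\,\d s$, expand $(1+s^2)^{-\beta}=\sum_{n\ge0}\frac{(\beta)_n}{n!}(-s^2)^n$ by the binomial series, and integrate term by term to obtain $\sum_{n\ge0}\frac{(\beta)_n(-1)^n}{(2n+1)\,n!}(T/h)^{2n+1}$. The key algebraic identity is $(1/2)_n/(3/2)_n=1/(2n+1)$, which lets me recognize this series as $(T/h)\,{}_2F_1\!\big(1/2,\beta;3/2;-T^2/h^2\big)$, giving $\int_0^T(u^2+h^2)^{-\beta}\,\d u=h^{-2\beta}\,T\,{}_2F_1\!\big(1/2,\beta;3/2;-T^2/h^2\big)$. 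Substituting $T=h\tan(\theta_f)$ in the first integral (which produces the $h^{1-2\beta}\tan(\theta_f)$ prefactor) and $T=z$ in the second then yields the stated expression.

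I expect the main obstacle to be exactly this hypergeometric evaluation---specifically, justifying the term-by-term integration and matching the resulting power series to the ${}_2F_1$ parameters $0.5,\beta;1.5$ via the Pochhammer ratio; the probabilistic content is entirely front-loaded into Theorem~\ref{theorem1}, and the only other point needing care is fixing the integration window from $\rho$ and the one-sided geometry so that the limits come out as $z$ and $h\tan(\theta_f)$ rather than a symmetric interval, since a symmetric interval would collapse the second term.
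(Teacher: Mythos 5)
Your proposal is correct and follows essentially the same route as the paper: both reduce, via Theorem~\ref{theorem1} and the FOV-restricted support, to $\mathbb{E}(\mathcal{I}_{x})=\lambda\int_{z}^{h\tan(\theta_{f})}(u^{2}+h^{2})^{-\beta}\,\d u$, and then express this integral through ${}_{2}F_{1}$. The paper's proof simply asserts the support $[z,h\tan(\theta_{f})]$ with $f(x)=(x^{2}+h^{2})^{-\beta}$ and says ``integrating derives the result,'' whereas you make explicit the shift $u=x+z$ and the series evaluation $\int_{0}^{T}(u^{2}+h^{2})^{-\beta}\,\d u=h^{-2\beta}T\,{}_{2}F_{1}\bigl(0.5,\beta;1.5;-T^{2}/h^{2}\bigr)$ that the paper leaves implicit.
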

\begin{proof}
From Thm. \ref{theorem1}, with the support $\mathbb{S}=[z,h\tan(\theta_{f})]$ we have  
\begin{align}
\mathbb{E}(\mathcal{I}_{x})&=\int_{z}^{h\tan(\theta_{f})}\lambda(x)f(x)\d x,
\label{eqn:lem1p1}
\end{align}
because, the interferers are present only after a distance $z$ limited by the FOV of the PD. Now, for a one dimension network
\[ f(x) = \frac{1}{(x^{2}+h^{2})^{\beta}} \]
and $\lambda(x)=\lambda$ for a homogenous point process. Hence, integrating the same in \eqref{eqn:lem1p1} derives the result.
\end{proof}

\begin{lemma}
The mean co-channel interference in a two dimension LiBNet homogenously distributed Poisson point balloons is given as
\begin{align*} 
\mathbb{E}(\mathcal{I}_{x})&=\lambda\bigg(\frac{\pi(h^{2}+z^{2})^{1-\beta}}{\beta-1} - \frac{\pi h^{2-2\beta}\cos^{2\beta-2}(\theta_{f})}{\beta-1}\bigg).
\end{align*}
where, $\beta=m+3$.
\label{lem2}
\end{lemma}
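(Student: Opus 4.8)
The plan is to reduce the two–dimensional case to the master formula of Thm.~\ref{theorem1} and then carry out an elementary radial integration, exactly as was done for the one–dimensional Lem.~\ref{lem1}, the only essential difference being the area element in the plane. Starting from $\mathbb{E}(\mathcal{I}_{x})=\int_{\mathbb{S}}\lambda(x)f(x)\,\d x$, I would first specialise the two ingredients to the 2D homogeneous setting: for a homogeneous Poisson point process the intensity is constant, $\lambda(x)=\lambda$, and the integrand $f$ depends on an interferer only through its horizontal (radial) distance $r$ to the origin, since the geometry is rotationally symmetric about the victim's projected location. Writing $\beta=m+3$ as before, this gives $f(r)=(r^{2}+h^{2})^{-\beta}$, the planar analogue of the expression used in Lem.~\ref{lem1}.

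Next I would set up the integration domain and measure. In the plane the natural coordinates are polar, so $\d x=r\,\d r\,\d\phi$ and, by radial symmetry of $f$, the angular integral simply contributes a factor of $2\pi$. The support $\mathbb{S}$ is the annular region whose inner radius is $z$ — because the co-channel interferers only begin beyond the victim's own distance, precisely as the lower limit $z$ appeared in \eqref{eqn:lem1p1} — and whose outer radius is $h\tan(\theta_{f})$, imposed by the FOV constraint $\rho(\cdot)$. Thus I reduce the mean interference to
\begin{equation*}
\mathbb{E}(\mathcal{I}_{x})=2\pi\lambda\int_{z}^{h\tan(\theta_{f})}\frac{r}{(r^{2}+h^{2})^{\beta}}\,\d r .
\end{equation*}

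The remaining step is the radial integral, which — unlike the 1D case that produced a ${}_{2}F_{1}$ — is now elementary precisely because the area element supplies the extra factor $r$. The substitution $u=r^{2}+h^{2}$, $\d u=2r\,\d r$, turns the integrand into $\tfrac{1}{2}u^{-\beta}$, whose antiderivative is $-\tfrac{1}{2(\beta-1)}u^{1-\beta}$. Evaluating between the limits yields the lower-endpoint contribution $(h^{2}+z^{2})^{1-\beta}$, and at the upper endpoint I would simplify $\big(h^{2}\tan^{2}(\theta_{f})+h^{2}\big)^{1-\beta}=\big(h^{2}\sec^{2}(\theta_{f})\big)^{1-\beta}$ using $1+\tan^{2}(\theta_{f})=\sec^{2}(\theta_{f})=1/\cos^{2}(\theta_{f})$, which is what converts the bracket into $h^{2-2\beta}\cos^{2\beta-2}(\theta_{f})$. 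Collecting the $2\pi$, the $\tfrac{1}{2(\beta-1)}$, and the two endpoint terms reproduces the stated closed form.

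I do not anticipate a genuinely hard step here; the antiderivative is routine. The only real care is conceptual rather than computational: correctly justifying the geometric reduction to a single radial integral — namely that $f$ is radially symmetric so the angular factor is exactly $2\pi$, and that the FOV plus the ``interferers beyond $z$'' condition pin down the annulus $[z,\,h\tan(\theta_{f})]$ as the support. Getting that domain right is the crux; once it is fixed, the substitution and the trigonometric identity deliver the result mechanically.
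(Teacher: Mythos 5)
Your proposal is correct and follows essentially the same route as the paper: specialize Theorem~\ref{theorem1} to constant intensity $\lambda$, reduce by rotational symmetry to a polar integral over the annulus $[z,\,h\tan(\theta_{f})]$, and integrate radially. In fact your write-up is more careful than the paper's own proof, whose displayed polar integral drops the Jacobian factor $r$, the exponent $\beta$, and the factor $\lambda$ (evidently typos, since the stated closed form only follows with them included, exactly as in your substitution $u=r^{2}+h^{2}$ and the identity $1+\tan^{2}(\theta_{f})=1/\cos^{2}(\theta_{f})$).
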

\begin{proof}
For a two dimension homogenous Poisson point process of balloons we have $\lambda(x)=\lambda$ and $f(x,y)=\frac{1}{(x^{2}+y^{2}+h^{2})}$. So, from Thm. \ref{theorem1}, we can write
\[   \mathbb{E}(\mathcal{I}_{x}) = \int\int_{\mathbb{S}} \lambda \frac{1}{(x^{2}+y^{2}+h^{2})} \d x \d y. \]
Converting to polar coordinates $(r,\theta)$, we have
\[ \mathbb{E}(\mathcal{I}_{x}) = 2\pi \int_{z}^{h\tan(\theta_{f})} \frac{1}{(r^{2}+h^{2})} \d r. \]
Integrating further, derives the result.
\end{proof}
 
\section{Conclusion}
In this paper, a novel idea of using balloons for Li-Fi has been proposed. A disaster or a natural calamity has been taken as a use case. An overview of the physical design of the balloon has been given. Given the payload, the installation of the communication components has been proposed using the Philips Li-Fi equipments. Symmetry of installation has been kept in mind. Further, the functioning of a network of such Li-Fi balloons has been discussed and coined an acronym of LiBNet. The mean co-channel interference in both one and two dimension LiBNets has been derived, assuming the balloons are arranged as a homogenous Poisson point process.
Given the problems of contemporary interest, the possibilities of future work are extensive. Li-Fi, being a safe and high speed alternative to RF emergency communication set-ups, this concept, if implemented can save the lives of many. Tracking and positioning of people / pets becomes easier. But, Li-Fi, requiring too much line of sight, will have to co-exist with blockages. Hence an extension to a blockage model is desired. Also, as a future work, this concept can be further standardised and practically implemented.    
  
\bibliographystyle{IEEEtran}
\bibliography{IEEEfull,Reference1.bib}

\begin{thebibliography}{10}
\providecommand{\url}[1]{#1}
\csname url@samestyle\endcsname
\providecommand{\newblock}{\relax}
\providecommand{\bibinfo}[2]{#2}
\providecommand{\BIBentrySTDinterwordspacing}{\spaceskip=0pt\relax}
\providecommand{\BIBentryALTinterwordstretchfactor}{4}
\providecommand{\BIBentryALTinterwordspacing}{\spaceskip=\fontdimen2\font plus
\BIBentryALTinterwordstretchfactor\fontdimen3\font minus
  \fontdimen4\font\relax}
\providecommand{\BIBforeignlanguage}[2]{{%
\expandafter\ifx\csname l@#1\endcsname\relax
\typeout{** WARNING: IEEEtran.bst: No hyphenation pattern has been}%
\typeout{** loaded for the language `#1'. Using the pattern for}%
\typeout{** the default language instead.}%
\else
\language=\csname l@#1\endcsname
\fi
#2}}
\providecommand{\BIBdecl}{\relax}
\BIBdecl

\bibitem{haraldhass}
H.~Haas, L.~Yin, Y.~Wang, and C.~Chen, ``{What is LiFi?}'' \emph{Journal of
  Lightwave Technology}, vol.~34, no.~6, pp. 1533--1544, 2016.

\bibitem{barrykahn1}
F.~R. Gfeller and U.~Bapst, ``{Wireless In-House Data Communication via Diffuse
  Infrared Radiation},'' \emph{Proceedings of the IEEE}, vol.~67, no.~11, pp.
  1474--1486, 1979.

\bibitem{barrykahn2}
J.~R. Barry, J.~M. Kahn, E.~A. Lee, and D.~G. Messerschmitt, ``{High-Speed
  Nondirective Optical Communication for Wireless Networks},'' \emph{IEEE
  Network}, vol.~5, no.~6, pp. 44--54, 1991.

\bibitem{ham}
R.~C. Coile, ``{The Role of Amateur Radio in Providing Emergency Electronic
  Communication for Disaster Management},'' \emph{Disaster Prevention and
  Management: An International Journal}, vol.~6, no.~3, pp. 176--185, 1997.

\bibitem{cr1}
E.~Hossain, D.~Niyato, and Z.~Han, \emph{{Dynamic Spectrum Access and
  Management in Cognitive Radio Networks}}.\hskip 1em plus 0.5em minus
  0.4em\relax Cambridge university press, 2009.

\bibitem{cr2}
A.~Ghassemi, S.~Bavarian, and L.~Lampe, ``{Cognitive Radio for Smart Grid
  Communications},'' in \emph{Smart Grid Communications (SmartGridComm), 2010
  First IEEE International Conference on}.\hskip 1em plus 0.5em minus
  0.4em\relax IEEE, 2010, pp. 297--302.

\bibitem{haasprof}
H.~Haas, ``{LiFi is a Paradigm-Shifting 5G Technology},'' \emph{Reviews in
  Physics}, 2017.

\bibitem{base}
H.~Von~Blucher, H.~Von~Blucher, and E.~De~Ruiter, ``{Waterproof and
  Moisture-Conducting Fabric Coated with Hydrophilic Polymer},'' Jun.~12 1984,
  uS Patent 4,454,191.

\bibitem{sheet1}
K.~Kamnani and C.~Suratkar, ``{A Review Paper on Google Loon Technique},''
  \emph{International Journal of Research In Science \& Engineering}, vol.~1,
  no.~1, pp. 167--171, 2015.

\bibitem{poly1}
P.~Podsiadlo, A.~K. Kaushik, E.~M. Arruda, A.~M. Waas, B.~S. Shim, J.~Xu,
  H.~Nandivada, B.~G. Pumplin, J.~Lahann, A.~Ramamoorthy \emph{et~al.},
  ``{Ultrastrong and Stiff Layered Polymer Nanocomposites},'' \emph{Science},
  vol. 318, no. 5847, pp. 80--83, 2007.

\bibitem{poly2}
G.~G. Tibbetts, M.~L. Lake, K.~L. Strong, and B.~P. Rice, ``{A Review of the
  Fabrication and Properties of Vapor-Grown Carbon Nanofiber/Polymer
  Composites},'' \emph{Composites Science and Technology}, vol.~67, no.~7, pp.
  1709--1718, 2007.

\bibitem{poly3}
J.-W. Rhim, ``{Physical and Mechanical Properties of Water Resistant Sodium
  Alginate Films},'' \emph{LWT-Food Science and Technology}, vol.~37, no.~3,
  pp. 323--330, 2004.

\bibitem{ph1}
P.~Lighting, \emph{{Lighting manual}}.\hskip 1em plus 0.5em minus 0.4em\relax
  Phillips Lighting Company, 1981.

\bibitem{atc1}
A.~Surampudi, S.~S. Chapalgaonkar, and P.~Arumugam, ``{Experimental Tests for
  Outage Analysis in SISO Li-Fi Indoor Communication Environment},''
  \emph{Proceedings of the Asia-Pacific Advanced Network}, vol.~44, pp. 54--60,
  2017.

\bibitem{coverage}
M.~M. Azari, Y.~Murillo, O.~Amin, F.~Rosas, M.-S. Alouini, and S.~Pollin,
  ``{Coverage Maximization for a Poisson Field of Drone Cells},'' \emph{arXiv
  preprint arXiv:1708.06598}, 2017.

\bibitem{atckali}
A.~Surampudi and K.~Kalimuthu, ``{An Adaptive Decision Threshold Scheme for the
  Matched Filter Method of Spectrum Sensing in Cognitive Radio using Artificial
  Neural Networks},'' in \emph{Information Processing (IICIP), 2016 1st India
  International Conference on}.\hskip 1em plus 0.5em minus 0.4em\relax IEEE,
  2016, pp. 1--5.

\bibitem{wsn}
J.~Yick, B.~Mukherjee, and D.~Ghosal, ``{Wireless Sensor Network Survey},''
  \emph{Computer Networks}, vol.~52, no.~12, pp. 2292--2330, 2008.

\bibitem{atc2}
A.~Surampudi and K.~Kalimuthu, ``{An Energy Efficient Spectrum Sensing in
  Cognitive Radio Wireless Sensor Networks},'' \emph{arXiv preprint
  arXiv:1711.09255}, 2017.

\bibitem{wsn2}
K.~Akkaya and M.~Younis, ``{A Survey on Routing Protocols for Wireless Sensor
  Networks},'' \emph{Ad-Hoc Networks}, vol.~3, no.~3, pp. 325--349, 2005.

\bibitem{chen}
C.~Chen, S.~Videv, D.~Tsonev, and H.~Haas, ``{Fractional Frequency Reuse in
  DCO-OFDM-Based Optical Attocell Networks},'' \emph{Journal of Lightwave
  Technology}, vol.~33, no.~19, pp. 3986--4000, 2015.

\end{thebibliography}

\begin{appendices}
\section{}
\section*{Proof of Mean Interference}
\begin{proof}

The Laplace transform of the Interference $\mathcal{I}_{x}$ is given as

\begin{align}
\mathcal{L}_{\mathcal{I}_{x}}(s)&=\mathbb{E}_{\psi}(e^{-s\mathcal{I}_{x}}), \nonumber\\
&=\mathbb{E}_{\psi}(e^{-s\sum_{x \in \psi \setminus 0}f(x)}), \nonumber\\
&=\mathbb{E}_{\psi}\bigg(\prod_{x \in \psi \setminus 0 } e^{-sf(x)}\bigg),\nonumber\\
&\stackrel{(a)}{=}e^{-\lambda(x)\int_{\mathbb{R}}(1-e^{-sf(x)})\d x},
\label{eqn:first}
\end{align}
where, (a) follows from the definition of expectation operator. 
Now, taking the $n^{th}$ order derivative on L.H.S of \eqref{eqn:first} w.r.t $s$, we have 
\[(-1)^{n}\frac{\partial^{n}}{\partial s^{n}}(\mathbb{E}(e^{-s\mathcal{I}_{x}}))=\mathbb{E}(\mathcal{I}^{n}_{x}e^{-s\mathcal{I}_{x}}).\] 
Using the result in \eqref{eqn:first}, we can write for $n=1$ as
\begin{align}
\mathbb{E}(\mathcal{I}_{x}e^{-s\mathcal{I}_{x}})&= (-1)e^{-\lambda(x)\int_{\mathbb{R}}(1-e^{-sf(x)})\d x}\nonumber\\
&\ \ \ \ \ \ (-1)\big(\lambda(x)\int_{\mathbb{R}}(-1)(-f(x))\d x\big).
\label{eqn:second}
\end{align}
Substituting $s=0$ in \eqref{eqn:second} we have 
\[ \mathbb{E}(\mathcal{I}_{x})=\int_{\mathbb{R}}\lambda(x)f(x)\d x,   \]
proving the theorem. 
\end{proof}

\label{app:theorem1}  
\end{appendices}

\end{document}